\documentclass[aps,prd,reprint,nofootinbib,superscriptaddress]{revtex4-2}
\usepackage{amsmath,amssymb,amsfonts,mathtools,bm}
\usepackage{amsthm}
\usepackage{graphicx}
\usepackage{booktabs}
\usepackage{xcolor}
\usepackage[colorlinks=true,allcolors=blue!55!black]{hyperref}
\usepackage{microtype}
\newtheorem{theorem}{Theorem}

\newcommand{\dd}{\mathrm d}

\newcommand{\e}{\mathrm e}
\newcommand{\Mcal}{\mathcal M}
\newcommand{\Ncal}{\mathcal N}
\newcommand{\Fcal}{\mathcal F}

\usepackage{orcidlink}

\usepackage{geometry}
\begin{document}

\title{Thermal Alignment as a Pathway to Axion Dark Matter }

\author{Imtiaz Khan}
\email{ikhanphys1993@gmail.com }
\affiliation{Department of Physics, Zhejiang Normal University, Jinhua, Zhejiang 321004, China}
\affiliation{Research Center of Astrophysics and Cosmology, Khazar University, Baku, AZ1096, 41 Mehseti Street, Azerbaijan}

\author{G. Mustafa }
\email{gmustafa3828@gmail.com (Corresponding Author)}
\affiliation{Department of Physics, Zhejiang Normal University, Jinhua, Zhejiang 321004, China}

\author{Chengxun Yuan}
\email{yuancx@hit.edu.cn (Corresponding Author)}
\affiliation{School of Physics, Harbin Institute of Technology, Harbin 150001, People's Republic of China}

\author{Farkhod Botirov}
\email{f.botirov@nuu.uz}
\affiliation{National University of Uzbekistan, Tashkent 100174, Uzbekistan}

\author{Ahmadjon~Abdujabbarov}
\email{ahmadjonab@gmail.com}
\affiliation{School of Physics, Harbin Institute of Technology, Harbin 150001, People's Republic of China}

\author{Farruh~Atamurotov}
\email{atamurotov@yahoo.com}
\affiliation{Kimyo International University in Tashkent, Shota Rustaveli str. 156, Tashkent 100121, Uzbekistan}

\begin{abstract}
Thermal alignment cannot be inferred from the axion mean alone because the dissipative bath that erases the initial displacement also prepares field and momentum fluctuations. We derive a phase space covariance bound that quantifies this memory and noise relation in the full inertial Langevin system. A renormalizable finite temperature gauge theory then links the temporary susceptibility, the Chern Simons bath, bath termination, and the stable late potential through one scalar transition. Solving the coupled mean and covariance evolution demonstrates erasure of the incoming state, release of a causal infrared spectrum, and capture of the full phase space distribution as axion dark matter across the transition interval. Thermal alignment therefore determines a calculable late axion state rather than a homogeneous displacement alone.
\end{abstract}

\maketitle

\section{Introduction}\label{sec:intro}

Axion misalignment is commonly specified by a homogeneous angle, whereas thermal evolution acts on an open field system with both a mean and a covariance. The interaction that damps the displacement also produces stochastic field and momentum fluctuations. The dark matter abundance after release therefore depends on the complete phase space distribution at the disappearance of the interaction. This distinction becomes essential when temporary confinement or thermal friction modifies axion cosmology \cite{Marsh2016,DiLuzioReview2020}.

Vacuum misalignment establishes the homogeneous abundance relation \cite{Preskill1983,AbbottSikivie1983,DineFischler1983}. Temporary potentials and friction modify the mean trajectory through monopole induced curvature, trapping, and dissipative motion \cite{Kawasaki2018,CoGonzalezHarigaya2019,Papageorgiou2023,Choi2023}. Explicit Peccei Quinn breaking also changes defect evolution when the axion potential is imperfect \cite{Zhang2024}. Open system analyses determine stochastic evolution and momentum dependent thermal rates \cite{CaoBoyanovsky2022,Bouzoud2026,GuinSharma2026}. Nonadiabatic masses can retain or amplify coherent excitations through quench dynamics and parametric resonance \cite{Pirzada:2026npl,Khan:2026nsz}. The remaining problem concerns the joint evolution of the mean, field covariance, momentum covariance, and bath termination within one finite temperature Lagrangian.

The central relation follows from the covariance map of every damped Fourier mode. Attenuation of the retarded propagator suppresses sensitivity to the incoming covariance while the fluctuation kernel accumulates the controllability Gramian. These operations form one open system evolution, so stronger memory loss necessarily accompanies stronger stochastic preparation. We express the relation as a matrix inequality for the full second order Langevin system and retain inertia, field velocity correlations, and the complete covariance matrix. The determinant inequality also constrains the classical Gaussian phase space volume. In the local overdamped projection, the positive Gramian becomes a Laplace transform in $k^2$ and enforces complete monotonicity together with the causal $k^3$ infrared spectrum.

The covariance relation acquires cosmological meaning when a microscopic thermal history realizes its premises. We construct a weakly coupled $SU(2)_h\times SU(3)_\ell$ theory in which one scalar transition correlates three physical operations. Symmetry nonrestoration generates the high temperature $SU(2)_h$ susceptibility and Chern Simons friction. During the transition the high sector quarks become massless and $SU(2)_h$ becomes Higgsed, which removes the temporary curvature and suppresses its noise source. A second order parameter simultaneously gives mass to quarks in the confined $SU(3)_\ell$ sector and activates the stable axion potential. The Lagrangian thereby determines both endpoint potentials and the release map that connects their phase space distributions.

The analysis connects the finite temperature vacuum structure to the dilute instanton susceptibility, the Chern Simons damping kernel, the complete mean and covariance evolution through the scalar wall, finite bath memory, and the abundance normalized late spectrum. Direct numerical reconstruction confirms the central integral, the global minima, the covariance Gramian, and the resolved wall evolution. The resulting microscopic realization demonstrates the complete mechanism through couplings and thermal field profiles derived from the Lagrangian. Section~\ref{sec:uv} derives the phase exchange from the renormalizable theory. Section~\ref{sec:rates} relates the transition to both susceptibilities and the dissipative bath. Section~\ref{sec:theorem} establishes the phase space bound and its spatial projection. Section~\ref{sec:transition} follows memory loss, release, and capture as one dynamical chain. Section~\ref{sec:tests} examines the controlling approximations and separates the thermal spectrum from inflationary initial conditions.

\section{Finite temperature gauge completion}\label{sec:uv}

A microscopic completion must correlate the axion curvature, the dissipative coefficient, and the late mass through a common thermal origin. The construction below realizes this correlation through a phase exchange between two gauge sectors. One sector dominates the high temperature susceptibility and thermal noise, while the other dominates the stable low temperature potential. The scalar transition changes the relevant quark masses and the gauge symmetry realization at one temperature, so the release dynamics follow from the vacuum structure of a single renormalizable theory. Temporary confinement and explicit symmetry breaking have previously been used to control the homogeneous axion trajectory \cite{Dvali1995,Koutsangelas2023,DiLuzioSorensen2024,Brandenberger2025}. First order transitions and confinement during inflation provide complementary realizations of time dependent axion potentials \cite{Nakagawa2023,DvaliFitzKomisel2026}.

\subsection{Field content and renormalizable Lagrangian}

The gauge group is
\begin{equation}
 \mathcal G=SU(2)_h\times SU(3)_\ell .
 \label{eq:gaugegroup}
\end{equation}
A complex Peccei Quinn field $\Phi=(f_a+\rho)\e^{i\theta}/\sqrt2$ has charge $+1$ under $U(1)_{\rm PQ}$. The transition sector contains real singlets $S$ and $R$, an $SU(2)_h$ doublet $\mathcal H$, and an $O(N_X)$ multiplet $X_A$. Two Dirac fundamentals $q_i\sim(\bm2,\bm1)$ and $p_i\sim(\bm1,\bm3)$ acquire masses from $S$ and $R$, respectively. Two KSVZ pairs $Q_h\sim(\bm2,\bm1)$ and $Q_\ell\sim(\bm1,\bm3)$ are gauge vectorlike but PQ chiral, following the standard anomalous realization of hadronic axion couplings \cite{Kim1979,Shifman1980,DiLuzioReview2020}. $Q_{hL}$ and $Q_{\ell L}$ have PQ charge $0$, whereas $Q_{hR}$ and $Q_{\ell R}$ have charge $-1$. Each pair consequently contributes one unit to the corresponding anomaly coefficient. The renormalizable ultraviolet Lagrangian is
\begin{align}
 \mathcal L_{\rm UV}={}&|\partial\Phi|^2+\frac12(\partial S)^2
 +\frac12(\partial R)^2+\frac12(\partial X_A)^2\nonumber\\
 &+(D_\mu\mathcal H)^\dagger D^\mu\mathcal H-V_0
 -\frac14G_h^2-\frac14G_\ell^2\nonumber\\
 &+\sum_{i=1}^{2}\bar q_i i\!\not\!Dq_i
 +\sum_{i=1}^{2}\bar p_i i\!\not\!Dp_i\nonumber\\
 &+\bar Q_h i\!\not\!DQ_h+\bar Q_\ell i\!\not\!DQ_\ell\nonumber\\
 &-y_qS\sum_i\bar q_iq_i-y_pR\sum_i\bar p_ip_i\nonumber\\
 &-\left(y_{Kh}\Phi\bar Q_{hL}Q_{hR}
 +y_{K\ell}\Phi\bar Q_{\ell L}Q_{\ell R}+{\rm h.c.}\right)\nonumber\\
 &+\frac{g_h^2\theta_h^0}{32\pi^2}G_h^a\widetilde G_h^a\nonumber\\
 &+\frac{g_\ell^2\theta_\ell^0}{32\pi^2}G_\ell^A\widetilde G_\ell^A .
 \label{eq:lagrangianUV}
\end{align}
Here $G_h^2\equiv G_{h\mu\nu}^aG_h^{a\mu\nu}$ and analogously for $G_\ell^2$. The Dirac matter makes both gauge theories anomaly free. A $Z_2^S$ transformation sends $S\to-S$ and both $q_{iR}\to-q_{iR}$. $Z_2^R$ acts analogously on $R$ and $p_{iR}$. The even flavor multiplicity makes each discrete chiral transformation anomaly free modulo a $2\pi$ vacuum angle shift. Bare $q_i$ and $p_i$ masses are forbidden, and $\det M_q\propto S^2$, $\det M_p\propto R^2$ prevent the two signs of either order parameter from producing different axion minima.

For $y_{Kj}f_a/\sqrt2\gg T_0$, the KSVZ fermions and $\rho$ are Boltzmann suppressed throughout the transition. Integrating out these fields separates the Peccei Quinn scale from the finite temperature dynamics and leaves the anomaly as their low energy imprint. A chiral rotation of the heavy fermions followed by a constant shift of $\theta$ yields the topological terms \cite{Kim1979,Shifman1980}
\begin{equation}
 \mathcal L_{\theta}=\frac{g_h^2\theta}{32\pi^2}G_h^a\widetilde G_h^a
 +\frac{g_\ell^2(\theta-\delta)}{32\pi^2}G_\ell^A\widetilde G_\ell^A,
 \qquad \delta=\theta_h^0-\theta_\ell^0 .
 \label{eq:lagrangian}
\end{equation}
One axion removes one linear combination of the two vacuum angles, leaving the invariant relative angle $\delta$. The limit $\delta\to0$ restores CP. A small late displacement is therefore technically natural because radiative corrections remain proportional to CP breaking parameters. This relative angle later determines the coherent component of the captured state, whereas the fluctuation dissipation bath determines its connected covariance.

The renormalizable scalar potential is
\begin{align}
V_0={}&\lambda_\Phi\left(|\Phi|^2-\frac{f_a^2}{2}\right)^2
 +\left(|\Phi|^2-\frac{f_a^2}{2}\right)\mathcal P_\Phi\nonumber\\
&+\frac{m_S^2}{2}S^2+\frac{\lambda_S}{4}S^4
-\frac{\mu_R^2}{2}R^2+\frac{\lambda_R}{4}R^4 \nonumber\\
&-\mu_H^2\mathcal H^\dagger\mathcal H
+\lambda_H(\mathcal H^\dagger\mathcal H)^2 \nonumber\\
&+\frac{m_X^2}{2}X^2+\frac{\lambda_X}{4}(X^2)^2
+\frac{\kappa_{SR}}{4}S^2R^2
+\frac{\kappa_{SH}}{2}S^2\mathcal H^\dagger\mathcal H\nonumber\\
&-\frac{\kappa_{SX}}{4}S^2X^2
+\frac{\kappa_{RH}}{2}R^2\mathcal H^\dagger\mathcal H\nonumber\\
&+\frac{\kappa_{RX}}{4}R^2X^2
+\frac{\kappa_{HX}}{2}\mathcal H^\dagger\mathcal H X^2,
\label{eq:potential0}
\end{align}
where $X^2\equiv X_AX_A$ and
\begin{equation}
 \mathcal P_\Phi=\lambda_{\Phi S}S^2+\lambda_{\Phi R}R^2
 +2\lambda_{\Phi H}\mathcal H^\dagger\mathcal H+\lambda_{\Phi X}X^2 .
 \label{eq:Phiportals}
\end{equation}
At $\mu=T_0$ the benchmark takes $\lambda_\Phi=0.50$, $y_{Kh}=y_{K\ell}=1$, and $\mathcal P_\Phi=0$. It then has $m_\rho=f_a$ and $m_{Q_h}=m_{Q_\ell}=f_a/\sqrt2$, placing every PQ radial or KSVZ excitation more than seven hundred times above the transition temperature. The mass hierarchy places the radial and heavy fermion excitations thermally unpopulated and justifies their low energy elimination. Gauge interactions regenerate the $\Phi$ portals only beyond one loop because no light transition field shares a Yukawa vertex with $\Phi$. The induced thermal contributions remain smaller than the coefficient variations examined below. At $\mu=T_0$ the benchmark has $\kappa_{RH}=\kappa_{RX}=\kappa_{HX}=0$, while their one loop regeneration remains smaller than the ten percent thermal coefficient variation. The negative portal between $S$ and $X$ produces symmetry nonrestoration, a known possibility in multiscalar finite temperature theories with competing quartics \cite{Weinberg1974,MohapatraSenjanovic1979}. Boundedness in the $S$ and $X$ plane requires
\begin{equation}
 \kappa_{SX}^2<4\lambda_S\lambda_X .
 \label{eq:boundedness}
\end{equation}
The remaining mixed quartics used in the benchmark are nonnegative.
\subsection{Thermal masses and phase exchange}

The required thermal history exchanges the two susceptibilities while maintaining a direct trajectory between the high and low phases. The leading high temperature expansion of the one loop effective potential expresses this requirement directly in terms of quartic, Yukawa, and gauge couplings \cite{DolanJackiw1974,Weinberg1974,MohapatraSenjanovic1979}. Writing $\mathcal H^\dagger\mathcal H=h^2/2$, the coefficients in the present normalization are
\begin{align}
 c_S={}&\frac{\lambda_S}{4}+\frac{\kappa_{SR}}{12}
 +\frac{\kappa_{SH}}{3}+\frac{N_f^h d_h y_q^2}{6}
 -\frac{N_X\kappa_{SX}}{12},\label{eq:cS}\\
 c_R={}&\frac{\lambda_R}{4}+\frac{\kappa_{SR}}{12}
 +\frac{N_f^\ell d_\ell y_p^2}{6},\label{eq:cR}\\
 c_H={}&\frac{\lambda_H}{2}+\frac{3g_h^2}{16}
 +\frac{\kappa_{SH}}{12},\label{eq:cH}\\
 c_X={}&\frac{(N_X+2)\lambda_X}{12}-\frac{\kappa_{SX}}{12}.
 \label{eq:cX}
\end{align}
The spectator multiplicity permits $c_S<0$ while Eq.~\eqref{eq:boundedness} remains satisfied, whereas positive $c_X$ stabilizes $X_A=0$ throughout the transition range. These signs are the dynamical core of the phase exchange because the high temperature plasma favors the $S$ branch, while cooling removes that preference and permits the $(R,h)$ branch to dominate. The finite temperature scalar potential is
\begin{align}
 V_T={}&\frac12(m_S^2+c_ST^2)S^2+\frac{\lambda_S}{4}S^4\nonumber\\
 &+\frac12(-\mu_R^2+c_RT^2)R^2+\frac{\lambda_R}{4}R^4\nonumber\\
 &+\frac12(-\mu_H^2+c_HT^2)h^2+\frac{\lambda_H}{4}h^4 \nonumber\\
 &+\frac12(m_X^2+c_XT^2)X^2+\frac{\lambda_X}{4}(X^2)^2\nonumber\\
 &+\frac{\kappa_{SR}}{4}S^2R^2+\frac{\kappa_{SH}}{4}S^2h^2
 -\frac{\kappa_{SX}}{4}S^2X^2 .
 \label{eq:VT}
\end{align}
For $c_S<0$, the high temperature branch is
\begin{equation}
 S_>^2(T)=\frac{-m_S^2-c_ST^2}{\lambda_S},\qquad R_>=h_>=0,
 \label{eq:highbranch}
\end{equation}
when the numerator is positive. The low temperature branch is
\begin{align}
 S_<&=0,\nonumber\\
 R_<^2(T)&=\frac{\mu_R^2-c_RT^2}{\lambda_R},\qquad
 h_<^2(T)=\frac{\mu_H^2-c_HT^2}{\lambda_H}.
 \label{eq:lowbranch}
\end{align}
The mass parameters are chosen so that the two branches are degenerate at $T_0=10^6\,\mathrm{GeV}$. Positive $S^2R^2$ and $S^2h^2$ portals penalize simultaneous occupation of the competing order parameters and generate the barrier that converts the exchange of free energy ordering into a first order transition.

The benchmark parameters are
\begin{equation}
\begin{gathered}
 \lambda_S=\lambda_X=0.50,\quad \lambda_R=\lambda_H=0.05,\\
 \kappa_{SR}=1.50,\quad\kappa_{SH}=1.00,
 \quad \kappa_{SX}=0.44167,\\
 N_X=36,\quad y_q=0.25,\quad y_p=0.40311,\\
 \alpha_h=0.25,\quad m_S^2=0.20T_0^2,\\
 \mu_R^2=0.41180T_0^2,\quad \mu_H^2=0.80919T_0^2,\\
 m_X^2=0.50T_0^2.
\end{gathered}
\label{eq:couplings}
\end{equation}
They give
\begin{align}
 (c_S,c_R,c_H,c_X)&=(-0.700,0.300,0.697,1.547),\nonumber\\
 4\lambda_S\lambda_X-\kappa_{SX}^2&=0.805.
 \label{eq:thermalnumbers}
\end{align}
At specified $S$, minimization over $R$ and $h$ is analytic,
\begin{align}
 R^2(S,T)&=\left[\frac{\mu_R^2-c_RT^2-\kappa_{SR}S^2/2}{\lambda_R}\right]_+,
 \label{eq:Rreduce}\\
 h^2(S,T)&=\left[\frac{\mu_H^2-c_HT^2-\kappa_{SH}S^2/2}{\lambda_H}\right]_+,
 \label{eq:Hreduce}\\
 X^2(S,T)&=\left[\frac{\kappa_{SX}S^2/2-m_X^2-c_XT^2}{\lambda_X}\right]_+.
 \label{eq:Xreduce}
\end{align}
The one dimensional reduced potential was minimized at 1001 temperatures over $0.78\le T/T_0\le1.14$. Its minimum agrees with the lower of Eqs.~\eqref{eq:highbranch} and \eqref{eq:lowbranch} to $2.3\times10^{-16}T_0^4$. The intended branches remain the global minima. The spectator solution remains $X=0$, and its smallest Hessian eigenvalue along the high branch is $1.34T_0^2$. The multiplicity enhanced loop parameter $N_X\kappa_{SX}/(16\pi^2)=0.101$ determines the perturbative variation examined by the ten percent coefficient sweep. The vacuum search confirms the feature required by the axion mechanism together with the two stationary axes. The cosmological trajectory connects precisely the phases in which the high and low topological susceptibilities exchange dominance.

\begin{figure*}[t]
\centering
\includegraphics[width=190mm,height=40mm]{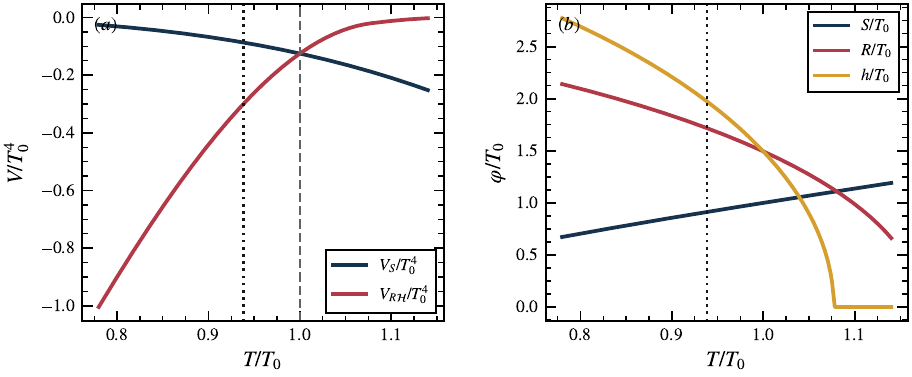}
\caption{Finite temperature phase exchange derived from Eq.~\eqref{eq:VT}. (a) Free energy densities of the high $S$ and low $(R,h)$ stationary branches. The broken curve marks $T_0$, and the dotted curve marks the constrained path transition benchmark $T_\star$. (b) Stationary branch amplitudes. The curves display the competing branches, while the physical vacuum follows the lower free energy in panel (a).}
\label{fig:transition}
\end{figure*}

A constrained field space path joining the two vacua provides a transition benchmark with a calculable upper action. Thermal vacuum decay is governed by the $O(3)$ Euclidean bounce and its fluctuation prefactor \cite{Coleman1977,CallanColeman1977,Linde1983}. Along the constrained path, the thin wall action is
\begin{equation}
 \frac{S_3^{\rm cp}}{T}=\frac{16\pi\sigma_{\rm cp}^3}{3T\Delta V^2},
 \qquad
 \sigma_{\rm cp}=\int_{\rm path}\dd\ell\sqrt{2[V_T(\ell)-V_>]}. 
 \label{eq:thinwall}
\end{equation}
Restricting the field path makes $S_3^{\rm cp}$ an upper bound on the minimum bounce action within the thermal potential. Because the minimum action cannot exceed the constrained action, the constrained crossing identifies the latest transition time and the lowest transition temperature admitted by the thermal potential. Defining $T_\star$ through the conventional criterion $S_3^{\rm cp}/T=140$ gives
\begin{equation}
 T_\star=9.384\times10^5\,\mathrm{GeV},\quad
 \frac{\beta_{\rm cp}}{H_\star}=4.72\times10^3,
 \quad \alpha_{\rm PT}=0.127 .
 \label{eq:transitionnumbers}
\end{equation}
The associated bubble to wall ratio is $R/\ell_w=3.65$. Within the standard $S_3/T=140$ criterion, phase degeneracy at $T_0$ and $S_3^{\rm min}\le S_3^{\rm cp}$ place the transition in $T_\star\le T_{\rm tr}<T_0$. We recomputed the cooling integral, local wall scale, mean capture, and turnover spectrum at nine temperatures spanning this full interval. The envelope gives
\begin{align}
 \mathcal N_0&\ge259.57,\qquad
 \left|\frac{A_f}{\delta_{\rm DM}}-1\right|\le3.82\times10^{-5},\nonumber\\
 0.124&\le\Delta_{S_a}^2(k_\star)\le0.150.
 \label{eq:transitionenvelope}
\end{align}
The same envelope gives $m_a\tau_w\le9.64\times10^{-3}$. The constrained action determines the timing interval, while every sampled temperature preserves alignment, release, and capture.

\section{Microscopic susceptibility and dissipation}\label{sec:rates}

The phase exchange links two coefficients that govern different aspects of the stochastic axion equation. The topological susceptibility determines the restoring force, while the Chern Simons diffusion rate determines both dissipation and noise. Thermal damping studies determine homogeneous energy loss \cite{UnruhWald1985,Turner1985,Papageorgiou2023,Choi2023,Banerjee2026}, whereas the present construction follows the associated covariance through the same microscopic transition. In the high phase, $m_q=y_q|S_>|$ produces a nonzero $SU(2)_h$ susceptibility. In the low phase, $S=0$ leaves two massless $q_i$ and the anomalous chiral Ward identity removes the high sector vacuum angle dependence. At the same temperature, $m_p=y_p|R_<|$ activates the $SU(3)_\ell$ susceptibility and $h_<$ suppresses $SU(2)_h$ Chern Simons diffusion. The scalar vacuum structure therefore synchronizes the disappearance of the temporary potential with the termination of the bath that generated its covariance.

For the weakly coupled high phase, the one loop finite temperature instanton density gives \cite{GrossPisarskiYaffe1981}
\begin{align}
 \chi_h(T,S)={}&2C_2\int_0^{\rho_{\max}}\dd\rho\,\rho^{-5}
 [b_0\ln(1/\rho\Lambda_h)]^{4}\nonumber\\
 &\times\e^{-b_0\ln(1/\rho\Lambda_h)-A(\pi\rho T)^2}
 (y_q|S|\rho)^2 .
 \label{eq:DIGA}
\end{align}
where $b_0=6$, $A=2$, and $C_2=0.466\e^{-3.358}$. The two quark mass insertions make $\chi_h\propto S^2$ explicit, so the same order parameter that establishes the high temperature phase also controls the disappearance of the temporary axion curvature. At $T_\star$ the integral gives
\begin{align}
 \chi_h&=3.889\times10^{13}\,\mathrm{GeV}^4,\nonumber\\
 m_h&=\frac{\sqrt{\chi_h}}{f_a}=6.236\times10^{-3}\,\mathrm{GeV}.
 \label{eq:chihnumber}
\end{align}
Changing the integration variable to $\ln(\rho T)$ reproduces the susceptibility at relative order $5.9\times10^{-11}$. The agreement confirms the stability of the instanton size integral under an analytically equivalent quadrature.

The low sector is confined at the transition scale. Its harmonic susceptibility is matched with the Di Vecchia Veneziano and Leutwyler Smilga inverse susceptibility relation \cite{DiVecchiaVeneziano1980,LeutwylerSmilga1992},
\begin{equation}
 \frac{1}{\chi_\ell(R)}=\frac{1}{\chi_{\rm YM}}+\frac{2}{y_p|R|\Sigma_\ell},
 \qquad \chi_{\rm YM}=\Lambda_\ell^4,
 \quad \Sigma_\ell=\Lambda_\ell^3.
 \label{eq:chilow}
\end{equation}
We take $\Lambda_\ell=3T_0$, so the $SU(3)_\ell$ sector is already confined when the scalar transition occurs, at $T_\star/\Lambda_\ell=0.313$. Its two massless $p_i$ remove the vacuum angle dependence in the high phase. The $R$ condensate then introduces the quark masses required by the chiral susceptibility. This ordering is essential because confinement is present before capture, but the axion potential remains absent until the quark mass matrix becomes nonsingular. Equation~\eqref{eq:chilow} gives $m_a(T_\star)=2.895\,\mathrm{TeV}$ at capture and $m_a(0)=3.618\,\mathrm{TeV}$ at the zero temperature endpoint. Confining and trace anomaly sectors have also been used to organize inflationary scalar potentials \cite{Pirzada:2026uak,Pirzada:2026sle}. Here the same broad microscopic ingredients instead control a postinflationary exchange of topological susceptibilities.

Chern Simons diffusion performs two linked dynamical roles. Its retarded response damps the axion, while the symmetric correlator injects the noise required by thermal consistency. In the local low frequency limit,
\begin{equation}
 \Gamma_{\rm CS}^{\rm sym}=\kappa_{\rm CS}\alpha_h^5T^4,
 \qquad
 \Upsilon=\frac{\Gamma_{\rm CS}}{2Tf_a^2}.
 \label{eq:friction}
\end{equation}
Nonabelian real time calculations determine the weak coupling rate scaling and its normalization range \cite{MooreTassler2011,DOnofrio2014,GuinSharma2026}. Thermal axion calculations resolve the momentum and frequency dependence \cite{Bouzoud2026}, while mass modulating scalar analyses characterize complementary damping regimes \cite{Banerjee2026}. Taking $\kappa_{\rm CS}=40$ and varying the complete rate by a factor from $0.3$ to $3$ propagates the normalization range directly through both the dissipative and stochastic kernels. The reference point gives
\begin{equation}
 \Upsilon_-=1.742\times10^{-2}\,\mathrm{GeV},
 \qquad \frac{3H+\Upsilon_-}{H}=1.329\times10^4.
 \label{eq:frictionnumber}
\end{equation}
After $\mathcal H$ condenses, the unstable static gauge configuration acquires a barrier energy proportional to $h/g_h$, yielding the standard broken phase exponential suppression \cite{KlinkhamerManton1984,ArnoldMcLerran1987}
\begin{equation}
 \frac{\Gamma_{\rm CS}^{\rm br}}{\Gamma_{\rm CS}^{\rm sym}}
 \simeq\exp\!\left[-\frac{4\pi B h}{g_hT}\right],
 \qquad B=1.56,
 \label{eq:sphsupp}
\end{equation}
which evaluates to $7.74\times10^{-11}$ at the low phase endpoint. The two suppression mechanisms are dynamically distinct and thermodynamically synchronized because massless $q_i$ eliminate $\chi_h$ through chiral symmetry, while the Higgs condensate suppresses the topological diffusion that sustained the bath.

The axion potential through the wall is therefore
\begin{equation}
 V_a(\theta,z)=-\chi_h(z)\cos\theta
 -\chi_\ell(z)\cos(\theta-\delta),
 \label{eq:wallpotential}
\end{equation}
with harmonic curvature and moving minimum
\begin{equation}
 \omega_0^2(z)=\frac{\chi_h(z)+\chi_\ell(z)}{f_a^2},
 \qquad
 \theta_\star(z)=\delta\frac{\chi_\ell(z)}{\chi_h(z)+\chi_\ell(z)}.
 \label{eq:wallminimum}
\end{equation}
Figure~\ref{fig:switch} displays the susceptibility exchange and the associated motion of the local minimum. The transition reorganizes the restoring potential while extinguishing the stochastic environment that equilibrated the incoming state. The simultaneous motion of the minimum and collapse of the damping kernel constitute the microscopic release event for the prepared covariance.

\begin{figure*}[t]
\centering
\includegraphics[width=190mm,height=40mm]{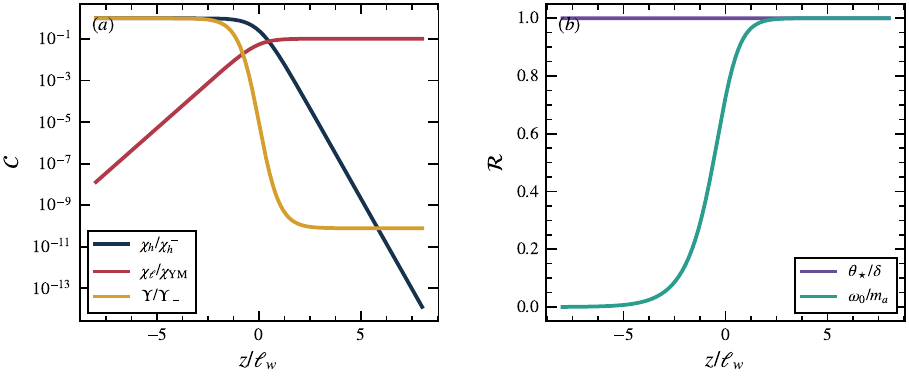}
\caption{Microscopic coefficients across the scalar wall. (a) The high susceptibility and Chern Simons damping vanish while the low susceptibility approaches its confined value. (b) The axion minimum moves from $0$ to $\delta$ as the curvature changes from $m_h$ to $m_a$. Each curve follows from the scalar profiles inserted into Eqs.~\eqref{eq:DIGA}, \eqref{eq:chilow}, \eqref{eq:sphsupp}, and \eqref{eq:wallminimum}.}
\label{fig:switch}
\end{figure*}

\section{Full phase space memory and noise bound}\label{sec:theorem}

Once the microscopic theory determines the restoring force and the noise kernel, state preparation becomes a question about the covariance map generated by each damped field mode. The deterministic propagator and the stochastic Gramian form the affine map from the incoming phase space distribution to the released state. Their algebraic relation yields a bound that retains the full inertial dynamics before any overdamped projection.

For each comoving mode, define $\bm z_{\bm k}=(\vartheta_{\bm k},\dot\vartheta_{\bm k})^T$, where $\vartheta=\theta-\langle\theta\rangle$. The local Langevin equation is
\begin{align}
 \dot{\bm z}_{\bm k}&=A_k\bm z_{\bm k}+\bm\xi_{\bm k},\nonumber\\
 A_k&=\begin{pmatrix}0&1\\-\Omega_k^2&-\Gamma\end{pmatrix},
 \qquad \Omega_k^2=m^2+\frac{k^2}{a^2}.
 \label{eq:matrixLangevin}
\end{align}
where $\Gamma=3H+\Upsilon$. Fluctuation dissipation balance gives \cite{Miyamoto2014,Bartrum2015,CaoBoyanovsky2022,Ota2024}
\begin{align}
 \langle\bm\xi_{\bm k}(t)\bm\xi_{\bm k'}^T(t')\rangle
 ={}&(2\pi)^3\delta^3(\bm k+\bm k')N_k\delta(t-t'),\nonumber\\
 N_k={}&\begin{pmatrix}0&0\\0&2\Upsilon T/(a^3f_a^2)\end{pmatrix}.
 \label{eq:noiseMatrix}
\end{align}
The covariance $C_k=\langle\bm z_{\bm k}\bm z_{-\bm k}^T\rangle$ therefore obeys the matrix Lyapunov equation
\begin{equation}
 \dot C_k=A_kC_k+C_kA_k^T+N_k.
 \label{eq:Lyapunov}
\end{equation}

For constant coefficients, the stationary covariance is
\begin{equation}
 \Sigma_k=\frac{rT}{a^3f_a^2}
 \begin{pmatrix}\Omega_k^{-2}&0\\0&1\end{pmatrix},
 \qquad r=\frac{\Upsilon}{3H+\Upsilon},
 \label{eq:Sigma}
\end{equation}
which satisfies $A_k\Sigma_k+\Sigma_kA_k^T+N_k=0$.

\begin{theorem}[Full phase space preparation bound]\label{thm:full}
Let $U_k(\Delta t)=\exp(A_k\Delta t)$ and define
\begin{equation}
 \varepsilon_{2,k}=\left\|\Sigma_k^{-1/2}U_k\Sigma_k^{1/2}\right\|_2^2,
 \qquad 0\le\varepsilon_{2,k}\le1.
 \label{eq:epsilon2}
\end{equation}
For any positive initial covariance $C_{k,i}$,
\begin{align}
 C_{k,f}&=U_kC_{k,i}U_k^T+W_k,\label{eq:Csolution}\\
 W_k&=\Sigma_k-U_k\Sigma_kU_k^T
 \succeq(1-\varepsilon_{2,k})\Sigma_k .
 \label{eq:matrixbound}
\end{align}
Consequently,
\begin{equation}
 C_{k,f}\succeq(1-\varepsilon_{2,k})\Sigma_k,
 \qquad
 \det C_{k,f}\ge(1-\varepsilon_{2,k})^2\det\Sigma_k.
 \label{eq:detbound}
\end{equation}
Two initial covariances evolved with identical coefficients differ by $U_k(C_{k,i}^{(1)}-C_{k,i}^{(2)})U_k^T$.
\end{theorem}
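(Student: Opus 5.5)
The plan is to solve the Lyapunov equation \eqref{eq:Lyapunov} explicitly with constant coefficients, express the Gramian through the stationary covariance, and then obtain the Loewner bound by conjugating with $\Sigma_k^{\pm1/2}$.

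\emph{Solution formula.} Variation of constants gives
\[
C_{k,f}=U_kC_{k,i}U_k^T+W_k,\qquad W_k=\int_0^{\Delta t}\e^{A_ks}N_k\e^{A_k^Ts}\,\dd s,
\]
which is Eq.~\eqref{eq:Csolution}. The equation is affine in $C$, so two solutions with identical coefficients obey the homogeneous equation. Their difference is therefore $U_k(C^{(1)}_{k,i}-C^{(2)}_{k,i})U_k^T$, which proves the last sentence of the theorem.

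\emph{Gramian identity.} Substitute $N_k=-(A_k\Sigma_k+\Sigma_kA_k^T)$, which holds by Eq.~\eqref{eq:Sigma}. The integrand becomes
\[
-\frac{\dd}{\dd s}\left(\e^{A_ks}\Sigma_k\e^{A_k^Ts}\right),
\]
so integrating over $[0,\Delta t]$ yields $W_k=\Sigma_k-U_k\Sigma_kU_k^T$.

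\emph{Loewner bound.} Set $M=\Sigma_k^{-1/2}U_k\Sigma_k^{1/2}$. Then
\[
U_k\Sigma_kU_k^T=\Sigma_k^{1/2}MM^T\Sigma_k^{1/2}.
\]
Since $MM^T\preceq\|M\|_2^2\,I=\varepsilon_{2,k}I$, conjugation by $\Sigma_k^{1/2}$ preserves the order and gives $U_k\Sigma_kU_k^T\preceq\varepsilon_{2,k}\Sigma_k$. This is exactly \eqref{eq:matrixbound}. Adding the positive semidefinite term $U_kC_{k,i}U_k^T$ gives the first inequality in \eqref{eq:detbound}.

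\emph{Determinant.} If $B\succeq A\succ0$, then $A^{-1/2}BA^{-1/2}\succeq I$, so $\det B\ge\det A$. Apply this with $A=(1-\varepsilon_{2,k})\Sigma_k$ in dimension two to get the factor $(1-\varepsilon_{2,k})^2$. The edge case $\varepsilon_{2,k}=1$ is trivial.

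\emph{Range of $\varepsilon_{2,k}$.} This is the step I expect to be the main obstacle. The lower bound $\varepsilon_{2,k}\ge0$ is immediate. For the upper bound, note $W_k\succeq0$ because $W_k$ is an integral of $\e^{A_ks}N_k\e^{A_k^Ts}$ with $N_k\succeq0$. Hence $U_k\Sigma_kU_k^T\preceq\Sigma_k$, which is equivalent to $MM^T\preceq I$, i.e.\ $\|M\|_2\le1$.

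I would verify two details:
\begin{itemize}
\item $\Sigma_k\succ0$, which requires $r>0$ and $\Omega_k^2>0$. This holds since $m^2>0$ and $\Upsilon>0$.
\item The stationarity relation. Writing $c=rT/(a^3f_a^2)$, direct multiplication gives $A_k\Sigma_k+\Sigma_kA_k^T=\mathrm{diag}(0,-2\Gamma c)$. Matching $N_k$ then requires $2\Gamma c=2\Upsilon T/(a^3f_a^2)$, i.e.\ $c\,\Gamma=\Upsilon T/(a^3f_a^2)$, which is exactly $r=\Upsilon/\Gamma$.
\end{itemize}

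The treatment of the scale factor needs care. Within a step where $a$, $H$ and $\Upsilon$ are held fixed, the coefficients are constant and the argument goes through as stated; the theorem asserts only this constant-coefficient case. For time-dependent coefficients the same structure would apply piecewise, but only as a remark.
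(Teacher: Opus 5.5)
Your proposal is correct and follows the paper's proof essentially step for step. Both use variation of constants, the Gramian identity $W_k=\Sigma_k-U_k\Sigma_kU_k^T$ obtained by integrating the stationary Lyapunov equation, $W_k\succeq0$ to get $\varepsilon_{2,k}\le1$, the operator-norm bound $MM^T\preceq\varepsilon_{2,k}I$ conjugated by $\Sigma_k^{1/2}$, and monotonicity of the determinant. Your explicit checks of $r=\Upsilon/\Gamma$, of $\Sigma_k\succ0$, and of the edge case $\varepsilon_{2,k}=1$ fill in details the paper leaves implicit.
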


\begin{proof}
Variation of constants gives Eq.~\eqref{eq:Csolution} with
$W_k=\int_0^{\Delta t}U_k(s)N_kU_k^T(s)\dd s\succeq0$. Integrating the stationary Lyapunov equation gives $W_k=\Sigma_k-U_k\Sigma_kU_k^T$, so $U_k\Sigma_kU_k^T\preceq\Sigma_k$ and $0\le\varepsilon_{2,k}\le1$. The operator norm definition in Eq.~\eqref{eq:epsilon2} implies
$\Sigma_k^{-1/2}U_k\Sigma_kU_k^T\Sigma_k^{-1/2}\preceq\varepsilon_{2,k}I$, which proves Eq.~\eqref{eq:matrixbound}. Positive $U_kC_{k,i}U_k^T$ gives the first inequality in Eq.~\eqref{eq:detbound}. Monotonicity of the determinant on positive matrices gives the second. The final statement follows by subtracting the two covariance solutions.
\end{proof}

The bound concerns the complete inertial state and retains the velocity variance together with the field velocity cross covariance. Brownian axion analyses establish the relevance of open system fluctuations in other thermal histories \cite{CaoBoyanovsky2022}. Here the parameter $\varepsilon_{2,k}$ measures the largest surviving fraction of the incoming covariance in the thermal metric defined by $\Sigma_k$. As $\varepsilon_{2,k}$ decreases, the injected Gramian approaches the stationary covariance in every phase space direction. Dissipation and stochastic preparation therefore operate as complementary components of one completely positive covariance map. Equation~\eqref{eq:detbound} also bounds the classical Gaussian phase space volume. After specifying a reference cell, the differential entropy remains above the value associated with $(1-\varepsilon_{2,k})\Sigma_k$. The theorem identifies the irreducible state produced by thermal memory loss. Inflationary expansion suppresses particular initial state memories through a distinct dynamical map \cite{Khan:2026doo}, while the matrix inequality constrains memory loss generated by a dissipative thermal bath.

Figure~\ref{fig:bound} evaluates the theorem at finite thermal duration. Direct evolution of the covariance reproduces the algebraic Gramian with a maximum normalized residual of $6.7\times10^{-10}$. The semidefinite margin of $-2.0\times10^{-10}$ remains at the numerical precision, thereby confirming the matrix identity without invoking the overdamped projection.

\begin{figure*}[t]
\centering
\includegraphics[width=190mm,height=40mm]{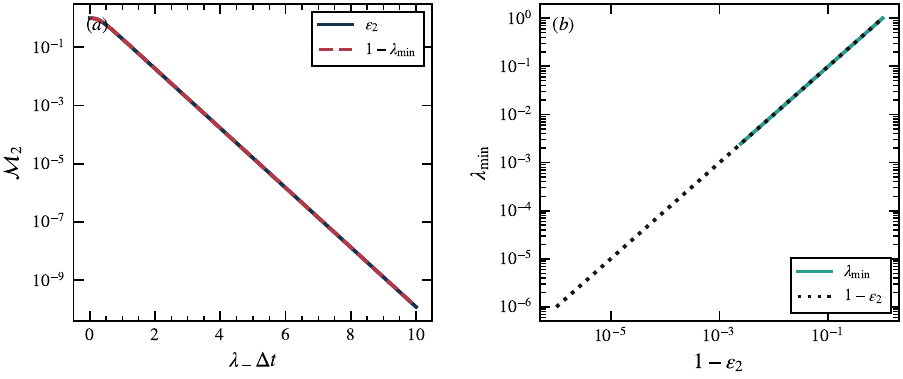}
\caption{Full inertial memory and noise relation. (a) The propagated memory $\varepsilon_2$ and the complement of the minimum normalized noise eigenvalue coincide for the constant coefficient thermal interval. (b) Direct comparison of the injected covariance eigenvalue with the lower bound in Eq.~\eqref{eq:matrixbound}. The dotted diagonal represents the proved inequality.}
\label{fig:bound}
\end{figure*}

\subsection{Spatial spectrum and thermodynamic identity}

The phase space theorem determines what the bath injects, while the spatial projection determines how that injected state is distributed across wavelengths. In the regime $\Omega_k/\Gamma\ll1$, eliminating the rapidly relaxing velocity gives
\begin{equation}
 \dot\vartheta_{\bm k}+\lambda_k\vartheta_{\bm k}=\eta_{\bm k},
 \qquad \lambda_k=\frac{\Omega_k^2}{\Gamma},
 \label{eq:OD}
\end{equation}
with diffusion coefficient $D=\Upsilon T/(a^3f_a^2\Gamma^2)$. The equal time field spectrum satisfies
\begin{equation}
 \dot P_\theta=-2\lambda_kP_\theta+2D.
 \label{eq:PspectrumODE}
\end{equation}
For arbitrary positive time dependent coefficients, the injected contribution is a positive Laplace transform,
\begin{equation}
 P_{\rm th}(k)=2\int_{t_i}^{t_f}\dd t\,D(t)
 \exp[-2A(t)-2k^2B(t)],
 \label{eq:LaplaceMain}
\end{equation}
where $A(t)=\int_t^{t_f}m^2/\Gamma\,\dd u$ and $B(t)=\int_t^{t_f}(a^2\Gamma)^{-1}\dd u$. Hence
\begin{equation}
 (-1)^n\partial_{(k^2)}^nP_{\rm th}(k)\ge0,
 \qquad
 \Delta_\theta^2(k)=\frac{P_0}{2\pi^2}k^3+\mathcal O(k^5).
 \label{eq:completeMono}
\end{equation}
Complete monotonicity strengthens the infrared white noise statement familiar from dynamic critical phenomena, causal ordering, and Kibble Zurek scaling \cite{HohenbergHalperin1977,Kibble1976,Zurek1985,Chandran2012}. The positive superposition of decaying exponentials in $k^2$ excludes alternating shoulders and spectral ringing within the local overdamped harmonic class. Such features would identify inertial dynamics, nonlocal dissipation, or nonlinear evolution beyond the kernel in Eq.~\eqref{eq:LaplaceMain}.

In a quasistatic thermal interval, Eq.~\eqref{eq:Sigma} gives
\begin{equation}
 P_\theta(k)=\frac{rT}{a^3f_a^2(m^2+k^2/a^2)}.
 \label{eq:staticP}
\end{equation}
Defining $k_\star=am$ and $P_0=P_\theta(0)$ yields the exact susceptibility identity
\begin{equation}
 P_0k_\star^2=\frac{rT}{af_a^2}.
 \label{eq:susidentity}
\end{equation}
The identity expresses an equilibrium reciprocity between amplitude and correlation scale. Increasing the curvature shifts power toward larger $k_\star$ and reduces the zero mode susceptibility by the compensating factor $m^{-2}$. The damping rate determines the approach to equilibrium while the equilibrium product remains thermodynamic. Appendix~\ref{app:powerlaw} demonstrates that a homogeneous power law release preserves the related dilation identity $P_0\widehat k^2=I_p rT/(af_a^2)$. Rate independence therefore follows from fluctuation dissipation balance and scale dilation.

\section{Transition, release, and late capture}\label{sec:transition}

Conversion of a prepared thermal covariance into dark matter requires release faster than the late oscillator can adiabatically follow the moving minimum. This requirement connects the microscopic wall profile to the phase space map and distinguishes the present mechanism from homogeneous matching across a mass change. Related postinflationary quench calculations show that nonadiabatic matching can retain coherent energy or renormalize a relic abundance \cite{Khan:2026nsz}, while temperature dependent axion masses can excite parametric resonance \cite{Pirzada:2026npl}. Here the transition acts on a stochastic covariance and a coherent displacement simultaneously.

The stationary endpoints of Eq.~\eqref{eq:VT} and the barrier curvature determine the wall profiles. We parameterize the resolved path by $u(z)=[1+\tanh(z/\ell_w)]/2$,
\begin{equation}
 S=S_>(1-u),\qquad R=R_<u,\qquad h=h_<u,
 \label{eq:wallprofiles}
\end{equation}
then insert them into the microscopic expressions in Sec.~\ref{sec:rates}. The mean and covariance are evolved without an overdamped reduction,
\begin{align}
 \ddot{\bar\theta}+\Gamma(z)\dot{\bar\theta}
 +\omega_0^2(z)[\bar\theta-\theta_\star(z)]&=0,
 \label{eq:meanwall}\\
 \dot C_k=A_k(z)C_k+C_kA_k^T(z)+N_k(z).&
 \label{eq:covwall}
\end{align}
The initial state is the high phase stationary covariance in Eq.~\eqref{eq:Sigma}. Across the wall, $\chi_h\to0$, $\Upsilon\to0$, and $\chi_\ell\to\chi_\ell(R_<)$ occur along the same scalar trajectory. The resolved evolution follows the prepared covariance through the interval in which the temporary restoring force disappears and the late restoring force develops.

Before the wall, the benchmark remains overdamped. The exact slow eigenvalue of the inertial drift matrix is
\begin{equation}
 \lambda_-(T)=\frac{\Gamma(T)-\sqrt{\Gamma^2(T)-4m_h^2(T)}}{2},
 \qquad \Gamma=3H+\Upsilon .
 \label{eq:lambdaslow}
\end{equation}
Its radiation era alignment depth is
\begin{equation}
 \Ncal_0=\int_{T_\star}^{1.5T_0}\frac{\lambda_-(T)}{H(T)T}\dd T=368.29,
 \label{eq:depthnumber}
\end{equation}
where the overdamped approximation gives $337.15$. The propagated covariance memory is therefore below $10^{-319.89}$. Full time dependent evolution from initial matrices $0$, $\Sigma_i$, and $10^6\Sigma_i$ converges to the common final ratios
\begin{align}
 \frac{C_{\theta\theta}}{\Sigma_{\theta\theta}}&=1.0004734,
 &\frac{C_{\dot\theta\dot\theta}}{\Sigma_{\dot\theta\dot\theta}}&=1.0000871,\nonumber\\
 \frac{C_{\theta\dot\theta}}{\sqrt{\det\Sigma}}&=-1.3829\times10^{-4}.
 \label{eq:coolingcov}
\end{align}
The convergence of three initial covariances spanning six orders of magnitude demonstrates memory loss directly and confirms the integrated slow eigenvalue. Varying the dilute instanton susceptibility by $0.1$ to $10$ and the Chern Simons rate by factors from $0.3$ to $3$ leaves $\Ncal_0\ge11.25$, so even the least favorable tested point suppresses covariance memory below $1.7\times10^{-10}$.

The coherent displacement is not a free phenomenological normalization once the axion is required to account for the observed dark matter density. Because the mass grows from $m_a(T_\star)$ to $m_a(0)$ after oscillations begin, adiabatic number conservation gives
\begin{equation}
 \frac{\rho_a}{s}=\frac{m_a(0)m_a(T_\star)f_a^2\delta_{\rm DM}^2}{2s(T_\star)}.
 \label{eq:abundance}
\end{equation}
With $g_{*s}=120$ and $\rho_{\rm DM}/s=0.44\,\mathrm{eV}$, the microscopic realization requires
\begin{equation}
 \delta_{\rm DM}=6.044\times10^{-8}.
 \label{eq:deltanumber}
\end{equation}
The cosine correction to the harmonic force is $\delta_{\rm DM}^2/6=6.09\times10^{-16}$, while the correction to the potential energy is $\delta_{\rm DM}^2/12=3.04\times10^{-16}$. The abundance normalization places the microscopic realization deeply inside one harmonic basin, so the Gaussian covariance analysis and the microscopic cosine potential agree at the displayed precision.

The wall is sudden on the late oscillator scale, $m_a(T_\star)\tau_w=8.80\times10^{-3}$. The sudden release expression reproduces the resolved mean amplitude with relative discrepancy $3.18\times10^{-5}$. The phase space spectrum after release is
\begin{equation}
 P_A(k)=C_{\theta\theta}(k)+
 \frac{C_{\dot\theta\dot\theta}(k)}{m_a^2+k^2/a^2},
 \label{eq:PA}
\end{equation}
where the field velocity cross covariance cancels from the phase averaged oscillator energy while remaining part of the released Gaussian state. Equation~\eqref{eq:PA} identifies why the field spectrum alone does not determine the late abundance. Two states with identical $C_{\theta\theta}$ and different $C_{\dot\theta\dot\theta}$ contain different oscillator energies. The sudden release spectrum agrees with resolved wall evolution at fourteen momenta to a maximum relative discrepancy of $6.36\times10^{-5}$.

The isocurvature spectrum in the coherent dominated regime is
\begin{equation}
 \Delta_{S_a}^2(k)=\frac{4}{\delta_{\rm DM}^2}
 \frac{k^3P_A(k)}{2\pi^2}.
 \label{eq:isocurvature}
\end{equation}
At $k_\star=am_h$,
\begin{equation}
 \Delta_{S_a}^2(k_\star)=1.497\times10^{-1},
 \qquad
 k_\star=7.79\times10^{16}\,\mathrm{Mpc}^{-1}.
 \label{eq:isovalues}
\end{equation}
The turnover modes are deeply nonrelativistic after capture, $k_\star^2/m_a^2=4.64\times10^{-12}$. The turnover power is a finite microscopic inhomogeneity imprinted by the thermal correlation length, with its amplitude determined by the released phase space covariance. The density perturbation remains linear at $k_\star$. Eq.~\eqref{eq:isocurvature} reaches order unity only at shorter wavelengths, where nonlinear density evolution replaces the linear density map while the underlying Gaussian field covariance remains defined.

\begin{figure*}[t]
\centering
\includegraphics[width=190mm,height=40mm]{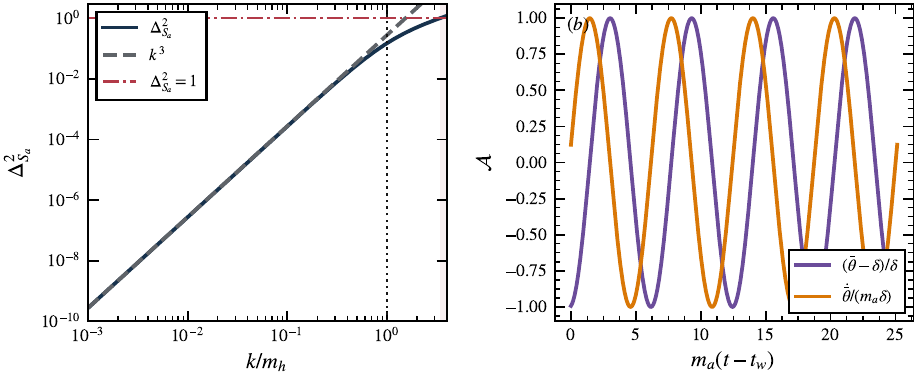}
\caption{Released state and late oscillator. (a) Abundance normalized isocurvature spectrum from the full field and velocity covariance. The dotted line marks $k_\star=m_h$, the dashed curve gives the infrared $k^3$ law, and the dot dashed line marks $\Delta_{S_a}^2=1$. The shaded region begins where the linear density map ceases to apply. (b) Separate post wall integration of the mean displacement and velocity quadratures over four oscillations. The constant envelope verifies stable capture into the late potential.}
\label{fig:capture}
\end{figure*}

\begin{table}[t]
\caption{Microscopic benchmark and state preparation outputs. $T_\star$ follows from the constrained path action in Eq.~\eqref{eq:thinwall}.}
\label{tab:benchmark}
\centering
\begin{tabular}{@{}lcc@{}}
\toprule
Quantity & Symbol & Value\\
\midrule
Transition benchmark & $T_\star$ & $9.384\times10^5\,\mathrm{GeV}$\\
Transition strength & $\alpha_{\rm PT}$ & $0.127$\\
Constrained inverse duration & $\beta_{\rm cp}/H_\star$ & $4.72\times10^3$\\
High curvature & $m_h$ & $6.236\times10^{-3}\,\mathrm{GeV}$\\
Damping depth & $\Gamma/H_\star$ & $1.329\times10^4$\\
Alignment integral & $\Ncal_0$ & $368.29$\\
Capture mass & $m_a(T_\star)$ & $2.895\,\mathrm{TeV}$\\
Zero temperature mass & $m_a(0)$ & $3.618\,\mathrm{TeV}$\\
Late displacement & $\delta_{\rm DM}$ & $6.044\times10^{-8}$\\
Turnover power & $\Delta_{S_a}^2(k_\star)$ & $1.497\times10^{-1}$\\
Finite memory parameter & $\tau_b\lambda_-$ & $4.35\times10^{-8}$\\
\bottomrule
\end{tabular}
\end{table}

\section{Dynamical stability and physical scales}\label{sec:tests}

Four dimensionless deformations organize the dynamical stability of the preparation history. Finite bath memory probes locality, rate variations quantify thermalization depth, wall variations resolve the crossover between sudden release and adiabatic tracking, and the cosmological scale map distinguishes the blue thermal spectrum from inflationary isocurvature. Each deformation therefore interrogates one stage of the physical mechanism while preserving the remaining stages.

\subsection{Finite bath memory and dynamical stability}

An auxiliary Ornstein Uhlenbeck force represents a finite bath correlation time with correlation kernel proportional to $\exp(-|t-t'|/\tau_b)$. The enlarged Markov system has a $3\times3$ drift matrix, and its stationary covariance follows from an algebraic Lyapunov equation. Figure~\ref{fig:sensitivity}(a) compares the field and velocity covariances with the white noise limit. The microscopic estimate $\tau_b\simeq(\alpha_h^2T_\star)^{-1}$ gives
\begin{equation}
 \tau_b\lambda_-=4.35\times10^{-8},
 \qquad \lambda_- = \frac{\Gamma-\sqrt{\Gamma^2-4m_h^2}}{2},
 \label{eq:markov}
\end{equation}
where both covariance ratios differ from unity below the displayed precision. Deviations emerge when $\tau_b\lambda_-$ approaches $10^{-2}$, more than five orders of magnitude above the microscopic value. The finite memory evolution therefore reproduces the local noise limit throughout the hierarchy derived from the gauge theory.

The $3\times3$ rate and susceptibility sweep in Fig.~\ref{fig:sensitivity}(b) recomputes the alignment integral and wall evolution at every point. The full domain preserves a captured oscillator, and the smallest alignment depth is $11.25$. Wall duration factors $0.1$, $1$, and $10$ change the captured mean amplitude by $3.2\times10^{-7}$, $3.2\times10^{-5}$, and $3.16\times10^{-3}$, respectively. A factor $10^3$ changes it by $0.731$, explicitly locating the crossover from sudden release to adiabatic tracking. These variations place the reference point deep inside the sudden release regime and quantify the dynamical distance to adiabatic tracking.

\begin{figure*}[t]
\centering
\includegraphics[width=190mm,height=40mm]{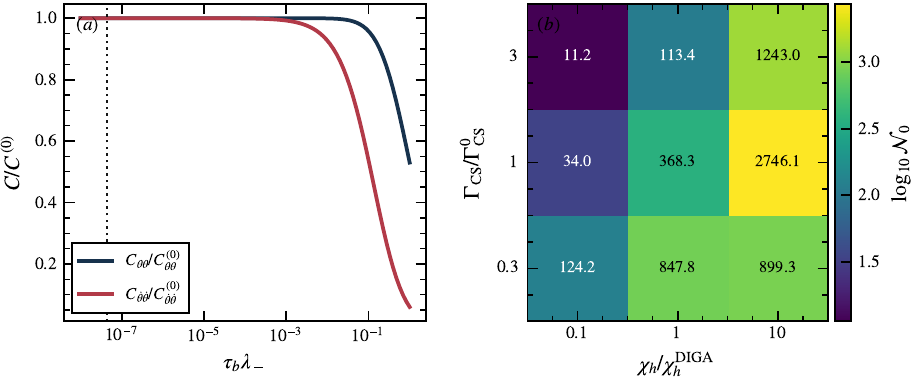}
\caption{Sensitivity tests. (a) Finite memory stationary covariances as functions of the bath memory parameter. The dotted curve represents the microscopic benchmark. (b) Alignment depth under separate variations of the Chern Simons rate and dilute instanton susceptibility. Every displayed point is recomputed from the microscopic rate integrals.}
\label{fig:sensitivity}
\end{figure*}

\begin{figure*}[t]
\centering
\includegraphics[width=190mm,height=40mm]{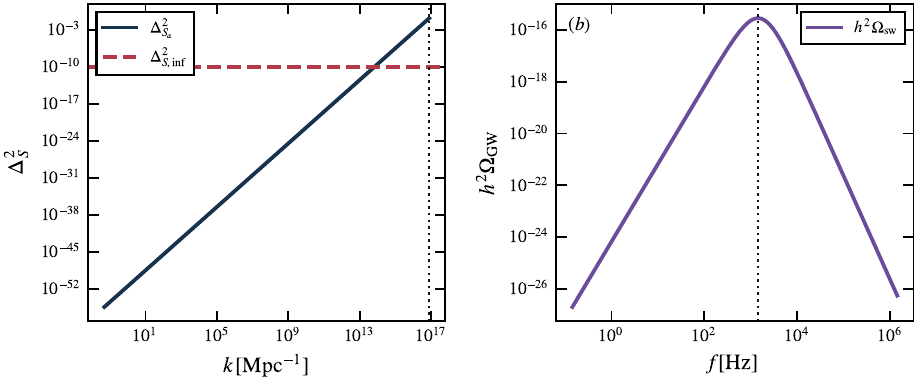}
\caption{Scale diagnostics. (a) The causal $k^3$ spectrum and a scale invariant reference normalized to the Planck uncorrelated isocurvature limit at the CMB pivot. Its extension to large $k$ compares the two spectral shapes. The dotted curve marks $k_\star$. (b) Sound wave gravitational wave diagnostic evaluated from the constrained path transition parameters with the finite lifetime prescription. The dotted curve marks its peak frequency.}
\label{fig:observables}
\end{figure*}

\subsection{Cosmological scale dependence}

Inflation and local thermal relaxation erase memory through physically distinct channels. Controlled multifield inflationary constructions determine CMB observables through heavy direction stabilization and attractor evolution \cite{Ijaz:2026ear,Pirzada:2026jml}. Confining dilaton and axion saxion realizations provide related examples in which heavy sector dynamics reshape the inflationary potential \cite{Pirzada:2026sle,Pirzada:2026uak}. Prolonged inflation can also suppress nonstandard initial state excitations before observable modes leave the horizon \cite{Khan:2026doo}. These mechanisms prepare an approximately scale invariant primordial spectrum, whereas the postinflationary local thermal state derived here inherits a causal correlation volume.

At leading order, inflationary axion isocurvature is nearly scale invariant, with amplitude controlled by $H_I/(\pi f_a\theta_i)$ \cite{Planck2018X,Marsh2016}. The local thermal state instead obeys $\Delta_{S_a}^2\propto k^3$ in the infrared. Entropy conservation maps the turnover to $k_\star=7.79\times10^{16}\,\mathrm{Mpc}^{-1}$, so
\begin{equation}
 \Delta_{S_a}^2(0.05\,\mathrm{Mpc}^{-1})=3.95\times10^{-56}.
 \label{eq:CMBvalue}
\end{equation}
Planck constrains an uncorrelated cold dark matter isocurvature fraction to $\beta_{\rm iso}<0.038$ at its reference scale \cite{Planck2018X}. Using the Planck best fit scalar amplitude $A_s=2.10\times10^{-9}$ \cite{Planck2018VI}, the corresponding scale invariant reference power is
\begin{equation}
 \Delta_{S,{\rm inf}}^2=\frac{\beta_{\rm iso}}{1-\beta_{\rm iso}}A_s
 <8.30\times10^{-11}.
 \label{eq:Planckpower}
\end{equation}
Equation~\eqref{eq:CMBvalue} lies $45.3$ orders of magnitude below this reference. The physical distinction extends beyond the amplitude. The thermal state rises as $k^3$ because disconnected causal regions contribute white noise, whereas inflationary axion isocurvature remains approximately scale invariant because accelerated expansion generates superhorizon fluctuations. Postinflationary white noise axion spectra have been studied in minicluster and large scale structure contexts \cite{Enander2017,Feix2020}. The turnover amplitude follows from the microscopic thermal phase space covariance, while random vacuum angles generate the postinflationary spectra studied in those analyses. Recent CMB constrained inflationary constructions emphasize the distinct role of the inflationary spectral tilt and reheating history \cite{Ijaz:2026ear}. Those observables leave the blue thermal signature derived after inflation unchanged.

The scalar transition also produces a gravitational wave diagnostic governed by the same thermal history as the axion covariance. Applying the standard sound wave estimate \cite{Caprini2020} to the constrained transition parameters gives
\begin{equation}
 f_{\rm sw}=1.45\times10^3\,\mathrm{Hz},
 \qquad h^2\Omega_{\rm sw}=2.82\times10^{-16},
 \label{eq:GWnumbers}
\end{equation}
including the finite sound wave lifetime factor. The gravitational wave amplitude records the same finite temperature transition whose scalar profiles determine the susceptibility exchange and the phase space release.

Four measured hierarchies define the demonstrated regime. Equation~\eqref{eq:markov} quantifies bath locality, the abundance normalized displacement quantifies harmonicity, Eq.~\eqref{eq:transitionenvelope} quantifies stability across the transition interval, and $\Delta_{S_a}^2(k_\star)<1$ quantifies linear density evolution at the turnover. At shorter wavelengths the density contrast becomes nonlinear, while the prepared Gaussian field covariance remains the initial condition for that evolution.

\section{Conclusion}\label{sec:conclusion}

Thermal alignment joins memory loss to stochastic preparation through fluctuation dissipation balance. The full inertial covariance makes this relation quantitative. Suppressing phase space memory to $\varepsilon_2$ requires an injected covariance at least as large as $(1-\varepsilon_2)\Sigma$, so the aligned mean coexists with a finite prepared state. In the overdamped spatial projection, positivity of the same noise kernel enforces a completely monotone spectrum and the causal $k^3$ infrared law. The equilibrium susceptibility then relates the zero mode amplitude to the thermal turnover scale.

The finite temperature gauge theory realizes this structure as one connected microscopic history. Its high phase generates the temporary axion curvature together with the Chern Simons bath. The scalar transition makes the relevant quarks massless and Higgses that gauge sector, which terminates the curvature and noise together. The second confining sector acquires massive quarks and captures the released phase space distribution in a stable potential. The microscopic realization erases the initial covariance by more than $319$ decades, suppresses the bath rate by eleven orders of magnitude, reproduces the resolved wall state with a $6.4\times10^{-5}$ sudden release discrepancy, and yields the observed abundance. Its spectrum is negligible on CMB scales and finite at the microscopic thermal turnover.

The theorem and microscopic realization establish a state based criterion for thermal axion cosmology. Mean displacement, covariance preparation, bath termination, and phase space capture arise from the same thermal dynamics in the demonstrated construction. The finite temperature theory therefore determines the initial condition of the late axion oscillator as a phase space distribution rather than as an angle prescribed by hand.

\appendix

\bibliography{refernces}

\section{Power law release and dilation identity}\label{app:powerlaw}

The power law limit isolates the local scaling structure near a smooth release and provides a closed analytic comparison with the fully microscopic wall. For the overdamped equation \eqref{eq:OD}, take $s=t_c-t>0$ and
\begin{equation}
 m^2(s)=\mu^2(s/\tau_Q)^p,
 \qquad p>0,
 \label{eq:powerlawA}
\end{equation}
while $a,T,\Gamma$, and $r$ remain constant across the release interval. Define
\begin{equation}
 \widehat t=\left[\frac{(p+1)\Gamma\tau_Q^p}{2\mu^2}\right]^{1/(p+1)},
 \qquad
 \widehat k^2=\frac{a^2\Gamma}{2\widehat t}.
 \label{eq:hatsA}
\end{equation}
Using $u=s/\widehat t$ in Eq.~\eqref{eq:LaplaceMain} gives
\begin{align}
 P_\theta(k)&=P_0\Fcal_p(x),\qquad x=k^2/\widehat k^2,\label{eq:powerPA}\\
 \Fcal_p(x)&=\frac{1}{I_p}\int_0^\infty\e^{-u^{p+1}-xu}\dd u,\label{eq:FpA}\\
 I_p&=\frac{1}{p+1}\Gamma\!\left(\frac{1}{p+1}\right),\label{eq:IpA}\\
 P_0&=\frac{2rT\widehat t}{a^3f_a^2\Gamma}I_p.
 \label{eq:P0A}
\end{align}
Multiplication by $\widehat k^2$ yields
\begin{equation}
 P_0\widehat k^2=I_p\frac{rT}{af_a^2}.
 \label{eq:dilationA}
\end{equation}
The relation is a dilation identity of the homogeneous power law kernel. Rescaling the quench duration shifts the amplitude and turnover in opposite directions while preserving their product. Equation~\eqref{eq:susidentity} is the equilibrium susceptibility identity recovered when the turnover is defined by the instantaneous mass.

\section{Time dependent covariance and overdamped bound}\label{app:timecov}

Time dependence replaces the algebraic stationary identity by a propagator weighted Gramian but preserves the separation between propagated memory and injected covariance. For time dependent $A_k(t)$, the fundamental matrix $U_k(t,t')$ gives
\begin{align}
 C_k(t_f)={}&U_k(t_f,t_i)C_k(t_i)U_k^T(t_f,t_i)\nonumber\\
 &+\int_{t_i}^{t_f}U_k(t_f,t)N_k(t)U_k^T(t_f,t)\dd t.
 \label{eq:generalGramian}
\end{align}
The second term is the positive controllability Gramian. A lower matrix bound requires a reference covariance and control of its time variation. In the overdamped scalar projection, write $D=\lambda_k\chi_k$ and
\begin{equation}
 \Mcal_k=\exp\!\left[-2\int_{t_i}^{t_f}\lambda_k\dd t\right].
 \label{eq:memoryA}
\end{equation}
Direct integration of Eq.~\eqref{eq:PspectrumODE} gives
\begin{equation}
 P_f=\Mcal_kP_i+2\int_{t_i}^{t_f}D(t)
 \exp\!\left[-2\int_t^{t_f}\lambda_k\dd u\right]\dd t.
 \label{eq:scalarSolutionA}
\end{equation}
If $\chi_{k,-}\le\chi_k(t)\le\chi_{k,+}$ over the support of $\lambda_k$, then
\begin{equation}
 \chi_{k,-}(1-\Mcal_k)\le P_{\rm th}(k)
 \le\chi_{k,+}(1-\Mcal_k).
 \label{eq:scalarBoundA}
\end{equation}
The proof follows by replacing $\chi_k$ by its bounds and recognizing the remaining integral as $1-\Mcal_k$.

\section{Numerical validation}\label{app:numerics}

The numerical analysis resolves each physical ingredient through an independently evaluated relation. Adaptive quadrature evaluates Eq.~\eqref{eq:DIGA}. Bounded scalar minimization follows the analytic elimination in Eqs.~\eqref{eq:Rreduce} and \eqref{eq:Hreduce}. Adaptive integration evolves Eqs.~\eqref{eq:meanwall} and \eqref{eq:covwall}. The dense spectrum follows from the sudden phase space map and is compared with fourteen resolved wall evolutions. A complementary calculation evaluates the instanton integral in logarithmic instanton size, evolves the normalized covariance, verifies Eq.~\eqref{eq:susidentity}, and repeats the global vacuum search at 1001 temperatures.

The numerical consistency checks yield
\begin{align}
 \frac{|\chi_h^{(\ln\rho)}-\chi_h^{(\rho)}|}{\chi_h}
 &=5.94\times10^{-11},\nonumber\\
 \max\|W_{\rm ODE}-W_{\rm alg}\|&=6.63\times10^{-10}.
 \label{eq:separateA}
\end{align}
and the reduced potential minimum differs from the analytic branch minimum by at most $2.22\times10^{-16}T_0^4$. These agreements establish the numerical consistency of the figures and reported values.
\end{document}